\newtheorem{theorem}{\bf Theorem}
\newtheorem{lemma}{\bf Lemma}
\newtheorem{definition}{\bf Definition}
\newtheorem{corollary}{\bf Corollary}
\newtheorem{remark}{\bf Remark}
\begin{document}

\title{{Generalized Talagrand Inequality for Sinkhorn Distance using Entropy Power Inequality}
}

\author{\IEEEauthorblockN{Shuchan Wang, Photios A. Stavrou and Mikael Skoglund}
\IEEEauthorblockA{\textit{Division of Information Science and Engineering} \\
\textit{KTH Royal Institute of Technology}\\
\{shuchan,fstavrou,skoglund\}@kth.se}}

\maketitle

\begin{abstract}
In this paper, we study the connection between entropic optimal transport and entropy power inequality (EPI). First, we prove an HWI-type inequality making use of the infinitesimal displacement convexity of optimal transport map. Second, we derive two Talagrand-type inequalities using the saturation of EPI that corresponds to a numerical term in our expression. We evaluate for a wide variety of distributions this term whereas for Gaussian and i.i.d. Cauchy distributions this term is found in explicit form. We show that our results extend previous results of Gaussian Talagrand inequality for Sinkhorn distance to the strongly log-concave case.  
\end{abstract}

\section{Introduction}\label{sec:intro}
Optimal transport (OT) theory studies how to transport one measure to another in the path with minimal cost. Wasserstein distance is the cost given by the optimal path and closely connected with information measures \cite{talagrand1996transportation, bakry2012dimension, cordero2017transport, bolley2018dimensional, raginsky2018concentration}.

During the last decade, OT has been studied and applied extensively, especially in machine learning community, see, e.g., \cite{zhang2018policy,montavon2016wasserstein,arjovsky2017wasserstein,rigollet2019uncoupled}.  Entropic OT, a technique to approximate the solution of original OT, was given for computational efficiency in \cite{cuturi2013sinkhorn}. A key concept in entropic OT is Sinkhorn distance, which is a generalization of Wasserstein distance with entropic constraint.

One of the applications of OT is in the field of functional inequalities with geometrical content, which includes, for example, Talagrand inequality, HWI inequality, Brunn-Minkowski inequality, etc. Talagrand inequality in \cite{talagrand1996transportation} upper bounds Wasserstein distance by KL-divergence. Recent results in \cite{bakry2012dimension,bolley2018dimensional} obtain several refined Talagrand inequalities with dimensional improvement on multidimensional Eucledian space. These inequalities bound Wasserstein distance with entropy power, which is sharper compared to KL-divergence. Later, \cite{9174478} shows the dimensional improvement can be used in entropic OT and gives a Gaussian Talagrand inequality for Sinkhorn distance. The Talagrand inequalities above can directly give results on measure concentration \cite{raginsky2018concentration}. A strong data processing inequality is also obtained in \cite{9174478} and yields a bound for the capacity of the relay channel.

In this paper, we propose a new approach to study entropic OT using EPIs (for details on EPI see, e.g., \cite{shannon1948mathematical,stam1959some,rioul2010information}), which is to capture the uncertainty caused by entropic constraint by EPI. As a first contribution, we give an HWI-type inequality for Sinkhorn distance by modifying Bolley's proof in \cite{bolley2018dimensional} (see Theorem \ref{th:HWI}). As a second contribution, we derive two new Talagrand-type inequalities (see Theorems \ref{th:1}, \ref{th:ETalagrand2}). These inequalities are obtained via a numerical term related to the saturation (or tightness) of EPI. The value of this term is explicit for Gaussian and i.i.d. Cauchy distributions. We also provide via numerical simulations the computation of the obtained numerical term for a variety of distributions (see Remark \ref{rm:3}). Moreover, we show that Theorem \ref{th:1} coincides with a result by Bolley \cite[Theorem 2.1]{bolley2018dimensional}  (see Corollary \ref{cor:1} and the discussion in Remark \ref{rm:4}) and that Theorem \ref{th:ETalagrand2} recovers the result by Bai et al. \cite[Theorem 2.2]{9174478} (see Remark \ref{cor:2}). Finally, we give a numerical simulation of Theorem \ref{th:1}.

 \section{Notations}
 $\nabla$ is the gradient operator, $\nabla\cdot$ is the divergence operator, $\Delta$ is the Laplacian operator,  $D^2$ is the Hessian operator, $I_n$ is the identity matrix, $Id$ is the identity map, $\|\cdot\|$ is the Euclidean norm, $C^k$ is the set of functions that k-times continuously differentiable. Let $\mathcal{X}, \mathcal{Y}$ be two Polish spaces. Let $\mu$ be a Borel measure on $\mathcal{X}$. For a measurable map $T: \mathcal{X} \rightarrow \mathcal{Y}$, $T_\#\mu$ denotes pushing forward of $\mu$ to $\mathcal{Y}$, i.e. for all $A \subset \mathcal{Y}$, $T_\#\mu[A] = \mu[T^{-1}(A)]$. For $p\geq 1$, $L^p(\mathcal{X})$ denotes the Lebesgue space of $p$-th order for the reference measure $\mu$. We write a random vector $X$ on a Polish space $\mathcal{X}$ in capital letter, an element $x\in\mathcal{X}$ in lower-case letter. $h(\cdot)$, $I(\cdot;\cdot)$, $D(\cdot\|\cdot)$, $J(X)$, $I(\cdot|\cdot)$ denote differential entropy, mutual information, KL-divergence, Fisher information and relative Fisher information, respectively. All the logarithms are natural logarithms. $\exists !$ is unique existence. $*$ is the convolution operator. $\delta(\cdot)$ is the Dirac delta function.  `R.H.S' is the abbreviation of `Right Hand Side'.
 \section{Known Results on OT}
In this section, we state some known results.
\par We start with the definition of Kantorovich problem in OT theory \cite{kantorovich1942translocation}.
\begin{definition}[Kantorovich Problem (KP)] Let $X$ and $Y$ be two random vectors on two Polish spaces $\mathcal{X}$, $\mathcal{Y}$. We denote $\mathcal{P(X)}$ and $\mathcal{P(Y)}$ as the sets of all probability measures on $\mathcal{X}$, $\mathcal{Y}$ respectively. Then $X$ and $Y$ have probability measures $P_X \in \mathcal{P(X)}$, $P_Y \in \mathcal{P(Y)}$. We denote $\Pi(P_X,P_Y)$ as the set of all joint probability measures on $\mathcal{X}\times \mathcal{Y}$ with marginal measures $P_X$, $P_Y$. For a given lower semi-continuous cost function $c(x,y): \mathcal{X}\times \mathcal{Y}\rightarrow \mathbb{R}\cup\{+\infty\}$, Kantorovich problem can be written as:
 \begin{equation}
	\label{eq:KP}
 	\inf_{P \in \Pi(P_X, P_Y)} \mathbb{E}_P[c(X,Y)].
 \end{equation}
\end{definition}

Cuturi in \cite{cuturi2013sinkhorn} gave the concept of entropic OT. In that definition, he adds an information constraint to \eqref{eq:KP}, i.e.,
 \begin{equation}
 	\label{eq:PR}
 	\begin{split}
 	\inf_{P \in \Pi(P_X, P_Y; R)} \mathbb{E}_P[c(X,Y)],
	\end{split}
 \end{equation}
where 
 \begin{equation*}
 \Pi(P_X, P_Y; R) := \{P \in \Pi(P_X, P_Y): I_P(X;Y)\leq R\},
  \end{equation*}
with $I_P(X;Y)$ denoting the mutual information \cite{cover1999elements} between X and Y, and $R$ a non-negative real number. It is well known that the constraint set is convex and compact with respect to the topology of weak convergence \cite[Lemma 4.4]{villani2008optimal}, \cite[1.4]{DupuisPaul2011AWCA}. Using the lower semi-continuity of $c(\cdot,\cdot)$ and the compactness of the constraint set, then, from the extreme value theorem, the minimum in \eqref{eq:PR} is attained.
 
Next, we state Talagrand inequality \cite{talagrand1996transportation}. To do it, we first define Wasserstein distance \cite[Definition 3.4.1]{raginsky2018concentration}. Wasserstein distance is a metric between two measures. Let $d$ be a metric between $x$ and $y$, Wasserstein distance of order $p, p \geq 1$, is defined as follows,
 \begin{equation}
	\label{eq:Was}
 	\mathcal{W}_p(P_X,P_Y) := \inf_{P \in \Pi(P_X, P_Y)} \big\{\mathbb{E}_P[d^p(X,Y)]\big\}^{\frac{1}{p}}.
 \end{equation}
 Similarly, one can define Sinkhorn distance of order $p$ as follows,
 \begin{equation}
 \mathcal{W}_p(P_X,P_Y;R) :=  \inf_{P \in \Pi(P_X, P_Y;R)} \big\{\mathbb{E}_P[d^p(X,Y)]\big\}^{\frac{1}{p}}.
\end{equation}
 
 \begin{theorem}[Talagrand Inequality]\cite[9.3]{villani2003topics}
Let $P_X$ be a reference probability measure with density $e^{-V(x)}$. We say $P_X$ satisfies $\boldsymbol{T}(\lambda)$, i.e., Talagrand inequality with parameter $\lambda > 0$, if for any $P_Y\in \mathcal{P}(\mathcal{Y})$,
\begin{equation}
\label{eq:Talagrand}
\mathcal{W}_2(P_X, P_Y) \leq \sqrt{\frac{2}{\lambda}D(P_Y\|P_X)}.
\end{equation}
\end{theorem}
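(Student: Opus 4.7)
The statement as written is essentially the \emph{definition} of the property $\boldsymbol{T}(\lambda)$; the substantive content (cf.\ Villani, Chapter~9.3) is that natural families satisfy it, the archetype being $\lambda$-strongly log-concave reference measures, i.e.\ $D^{2}V\succeq\lambda I_{n}$ in the sense of symmetric matrices. Under this sufficient condition I would follow the Otto--Villani program, which combines displacement convexity of the relative entropy with a gradient-flow argument.

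The plan has three steps. \emph{(i)} By Brenier's theorem there is a unique optimal map $T=\nabla\phi$ with $\phi$ convex pushing $P_{X}$ to $P_{Y}$; along the displacement interpolation $\rho_{t}=((1-t)Id+tT)_{\#}P_{X}$, a Monge--Amp\`ere change of variables combined with the Hessian bound $D^{2}V\succeq\lambda I_{n}$ shows that $F(\rho):=D(\rho\|P_{X})$ is $\lambda$-geodesically convex,
\begin{equation*}
\frac{d^{2}}{dt^{2}}F(\rho_{t})\;\geq\;\lambda\,\mathcal{W}_{2}^{2}(P_{X},P_{Y}).
\end{equation*}
This is exactly the infinitesimal-displacement-convexity mechanism the paper exploits in Theorem~\ref{th:HWI}, and when evaluated at $t=0$ it produces the HWI inequality $D\leq\mathcal{W}_{2}\sqrt{I(P_{Y}|P_{X})}-\tfrac{\lambda}{2}\mathcal{W}_{2}^{2}$. \emph{(ii)} The same Hessian bound yields the Bakry--\'Emery log-Sobolev inequality $D(\rho\|P_{X})\leq I(\rho|P_{X})/(2\lambda)$ for every $\rho$. \emph{(iii)} I would run the Fokker--Planck gradient flow $\partial_{t}\rho_{t}=\Delta\rho_{t}+\nabla\!\cdot\!(\rho_{t}\nabla V)$ from initial datum $P_{Y}$, which decays to $P_{X}$ with dissipation $\tfrac{d}{dt}D(\rho_{t}\|P_{X})=-I(\rho_{t}|P_{X})$ and Wasserstein speed $|\dot\rho_{t}|_{\mathcal{W}_{2}}\leq\sqrt{I(\rho_{t}|P_{X})}$; combining these with the triangle inequality, the change of variable $u=D(\rho_{t}\|P_{X})$ (so $du=-I\,dt$), and the LSI of step~(ii) gives
\begin{equation*}
\mathcal{W}_{2}(P_{Y},P_{X})\leq\int_{0}^{\infty}\!\sqrt{I(\rho_{t}|P_{X})}\,dt=\int_{0}^{D(P_{Y}\|P_{X})}\!\frac{du}{\sqrt{I}}\leq\int_{0}^{D(P_{Y}\|P_{X})}\!\frac{du}{\sqrt{2\lambda u}}=\sqrt{\tfrac{2}{\lambda}D(P_{Y}\|P_{X})}.
\end{equation*}

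The principal obstacle is step~(i): the Monge--Amp\`ere computation demands regularity of the Brenier map that is not a priori available, and the pointwise Hessian bound on $V$ must be propagated distributionally, typically by smoothing $P_{Y}$ with the heat semigroup and passing to the limit. The authors' Theorem~\ref{th:HWI} already dispatches a strictly harder version of this computation in the Sinkhorn setting, so specializing that argument to the unconstrained ($R=\infty$) regime would complete the proof without further work.
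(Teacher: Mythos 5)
This statement is quoted as background from Villani's book; the paper offers no proof of it, and as you correctly note, as written it is really the \emph{definition} of the property $\boldsymbol{T}(\lambda)$, with the substantive claim (Blower's, per the paper's Remark~1) being that $D^2V\geq\lambda I_n$ implies $\boldsymbol{T}(\lambda)$. Your Otto--Villani route (displacement convexity $\Rightarrow$ HWI, Bakry--\'Emery $\Rightarrow$ LSI, then the Fokker--Planck gradient flow with $\frac{d}{dt}D=-I$ and the change of variable $du=-I\,dt$) is a correct and classical proof of that implication, but it is genuinely different from, and heavier than, what the paper's own machinery delivers. The paper's chain Theorem~\ref{th:HWI} $\rightarrow$ Theorem~\ref{th:1} $\rightarrow$ Corollary~\ref{cor:1} $\rightarrow$ Remark~\ref{rm:4} proves the strictly stronger dimensional inequality \eqref{eq:Tal3} by a \emph{direct} mass-transport computation: the pointwise Taylor bound $V(y)-V(x)\geq\nabla V(x)\cdot(y-x)+\frac{\lambda}{2}\|y-x\|^2$ integrated against the Brenier coupling, integration by parts (Lemma~\ref{lemma:1}), and the concavity estimate $\int\Delta\varphi\,dP_X\geq ne^{\frac{1}{n}(h(Y)-h(X))}$; the classical inequality \eqref{eq:Talagrand} then falls out of \eqref{eq:uncon} via $e^{\mu}\geq 1+\mu$ together with $h(X)=\mathbb{E}[V(X)]$ and $D(P_Y\|P_X)=\mathbb{E}[V(Y)]-h(Y)$. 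What the direct route buys is that it avoids your step~(iii) entirely --- no semigroup, no LSI, no integration along the flow --- and yields the sharper entropy-power refinement for free; what your route buys is that it isolates LSI as an intermediate object and generalizes to settings (e.g.\ abstract metric-measure spaces with curvature bounds) where the explicit Taylor/Monge--Amp\`ere computation is unavailable. Your closing observation is exactly right: specializing the paper's Theorem~\ref{th:HWI} to $P_X=\mu$ (so $I(P_X|\mu)=0$) and $R=\infty$ already completes the proof, and that is precisely what Corollary~\ref{cor:1} and Remark~\ref{rm:4} do.
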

\begin{remark}
The inequality was originally introduced by Talagrand \cite{talagrand1996transportation} when $P_X$ is Gaussian. Blower \cite{blower2003gaussian}  gave a refinement and proved that strongly log-concavity, i.e., $D^2V \geq \lambda I_n$, leads to $\boldsymbol{T}(\lambda)$.
\end{remark}

Recently, new inequalities with dimensional improvements were obtained. These dimensional improvements were first observed in the Gaussian case of logarithmic Sobolev inequality, Brascamp–Lieb (or Poincaré) inequality \cite{bakry2006logarithmic} and Talagrand inequality \cite{bakry2012dimension}. For a standard Gaussian measure $P_X$, the dimensional Talagrand inequality has the form:
\begin{equation}
\label{eq:Tal1}
\mathcal{W}_2^2(P_X,P_Y)\leq \mathbb{E}[\|Y\|^2] + n - 2ne^{\frac{1}{2n}(\mathbb{E}[\|Y\|^2]-n-2D(P_Y\|P_X))}.
\end{equation}
Bolley et al. in \cite{bolley2018dimensional} generalized the results in \cite{bakry2006logarithmic,bakry2012dimension}  from Gaussian to strongly log-concave or log-concave. Next, we state his result. Let $dP_X = e^{-V}$, where $V: \mathbb{R}^n \rightarrow \mathbb{R}$ is $C^2$ continuous, $D^2V \geq \lambda I_n$. Bolley's dimensional Talagrand inequality is given as follows,
\begin{align}
 \frac{\lambda}{2}\mathcal{W}_2^2(P_X,P_Y) &\leq \mathbb{E}[V(Y)] - \mathbb{E}[V(X)] + n\nonumber\\
&- ne^{\frac{1}{n}(\mathbb{E}[V(Y)] - \mathbb{E}[V(X)]- D(P_Y\|P_X))}\label{eq:Tal3}.
\end{align}
Bai et al. in \cite{9174478} gave a generalization of \eqref{eq:Tal1} to Sinkhorn distance. When $P_X$ is standard Gaussian, 
\begin{equation}
\label{eq:Tal2}
\mathcal{W}^2_2(P_X, P_Y; R) \leq \mathbb{E}[\|Y\|^2] +n  - 2n\sqrt{\frac{1}{2\pi e}(1-e^{-\frac{2}{n}R})}e^{\frac{1}{n}h(Y)}.
\end{equation}
When $R \rightarrow +\infty$, this inequality coincides with \eqref{eq:Tal1}, which is tighter than (\ref{eq:Talagrand}). 
 
\section{Entropy Power Inequality and Deconvolution}
EPI  \cite{shannon1948mathematical} states that for all independent continuous random vectors $X$ and $Y$,
  \begin{equation*}
N(X+Y) \geq N(X)+N(Y),
 \end{equation*}
 where $N(X):= \frac{1}{2\pi e}e^{\frac{2}{n}h(X)}$ denotes the entropy power of $X$.  The equality is achieved when $X$ and $Y$ are Gaussian random vectors with proportional covariance matrices.
 
Deconvolution is a problem of estimating the distribution $dP_{X}$ by the observations $Y_1$,...,$Y_n$ corrupted by additive noice $Z_1$,...,$Z_n$, written as
\begin{equation*}
Y_i = X_i + Z_i,
\end{equation*}
where $X_i$'s are independent and i.i.d distributed in $dP_{X}$, $Z_i$'s are independent and i.i.d distributed in $dP_{Z}$. $X_i$'s and $Z_i$'s are mutual independent. Therefore, their distributions satisfy $dP_{Y} = dP_{X} * dP_{Z}$. Herein, we slightly abuse the concept as simply separating a random vector $Y$ into two independent random vectors $X$ and $Z$. Then their entropies can be bounded by EPI immediately.

Deconvolution is generally a  harder problem than convolution. For instance, log-concave family is convolution stable, i.e., convolution of two log-concave distributions is still log-concave, but we cannot guarantee that deconvolution of two log-concave distributions is still log-concave. A trivial case is the deconvolution of a log-concave distribution by itself is a Dirac function. It should be noted that there are numerical methods to compute deconvolution, see, e.g., \cite{stefanski1990deconvolving,fan1991optimal,masry1991multivariate}.

\section{Main Results}
In this section, we derive our main results. First, we give a new HWI-type inequality.
\begin{theorem}[HWI-type Inequality]
\label{th:HWI}
Let $\mathcal{X}=\mathcal{Y}=\mathbb{R}^n$. Let  $\mu$ be a probability measure with density $e^{-V(x)}$, where $V: \mathbb{R}^n \rightarrow \mathbb{R}$ is $C^2$ continuous, $D^2V \geq \lambda I_n$ with $\lambda > 0$. Let $P_X,P_Y$ be two probability measures on $\mathbb{R}^n$, $P_X,P_Y\ll \mu$. For any independent $Y_1, Y_2$ satisfying $Y_1+Y_2=Y$, $\mathbb{E}[Y_2]=0$ and $h(Y)-h(Y_2)\leq R$, we have
\begin{align}
\label{eq:HWI2}
\frac{\lambda}{2}\mathcal{W}_2^2&(P_X,P_Y;R) \leq \mathbb{E}[V(Y)] - \mathbb{E}[V(X)] + n \nonumber\\
&- n\,e^{\frac{1}{n}(h(Y_1) - h(X))} + \mathcal{W}_2(P_X, P_{Y_1})\sqrt{I(P_{X}|\mu)},
\end{align}
where $I(P |Q):=\mathbb{E}_P[\|\nabla(\log\frac{dP}{dQ})\|^2]$ is relative fisher information.
\end{theorem}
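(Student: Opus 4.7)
The plan is to adapt Bolley et al.'s dimensional HWI argument \cite{bolley2018dimensional} to the Sinkhorn setting by (i) using the decomposition $Y=Y_1+Y_2$ to construct an explicit admissible coupling, (ii) invoking a Bolley-type bound on the Wasserstein distance between $P_X$ and the auxiliary marginal $P_{Y_1}$, and (iii) using strong convexity of $V$ to transfer from $\mathbb{E}[V(Y_1)]$ to $\mathbb{E}[V(Y)]$. For (i), let $T$ be the Brenier map pushing $P_X$ onto $P_{Y_1}$, which exists because $P_X\ll\mu$ is absolutely continuous. Define a joint law of $(X,Y)$ by $Y=T(X)+Y_2$ with $Y_2$ drawn independently of $X$ according to $P_{Y_2}$. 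Then $T(X)+Y_2\sim P_{Y_1}\ast P_{Y_2}=P_Y$ and, since $Y_2\perp X$, we have $h(Y\mid X)\geq h(Y_2\mid X)=h(Y_2)$, so $I(X;Y)\leq h(Y)-h(Y_2)\leq R$ and the coupling is admissible for \eqref{eq:PR}. Using $\mathbb{E}[Y_2]=0$ and $Y_2\perp(X,T(X))$, the transport cost splits as $\mathbb{E}[\|X-Y\|^2]=\mathcal{W}_2^2(P_X,P_{Y_1})+\mathbb{E}[\|Y_2\|^2]$, which upper bounds $\mathcal{W}_2^2(P_X,P_Y;R)$.

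For (ii), consider the displacement geodesic $\mu_s=((1-s)\mathrm{Id}+sT)_{\#}P_X$ with density $\rho_s$, and study the functional $s\mapsto H(\mu_s|\mu)=-h(\mu_s)+\mathbb{E}_{\mu_s}[V]$. Following \cite{bolley2018dimensional}, one writes $\rho_s$ via the Monge--Amp\`ere identity, applies the AM--GM inequality $(\det A)^{1/n}\leq\mathrm{tr}(A)/n$ to the Jacobian of the transport, and uses $D^2V\succeq\lambda I_n$; together these furnish the dimensional term $n-n\,e^{(h(Y_1)-h(X))/n}$ together with the quadratic $\frac{\lambda}{2}\mathcal{W}_2^2(P_X,P_{Y_1})$. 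In contrast with Bolley's Talagrand case $P_X=\mu$, the time derivative of $H(\mu_s|\mu)$ at $s=0$ does not vanish; a Cauchy--Schwarz estimate controls it by $\mathcal{W}_2(P_X,P_{Y_1})\sqrt{I(P_X|\mu)}$. Collecting these ingredients yields the intermediate bound
\begin{equation*}
\tfrac{\lambda}{2}\mathcal{W}_2^2(P_X,P_{Y_1}) \leq \mathbb{E}[V(Y_1)]-\mathbb{E}[V(X)]+n-n\,e^{\frac{1}{n}(h(Y_1)-h(X))} + \mathcal{W}_2(P_X,P_{Y_1})\sqrt{I(P_X|\mu)}.
\end{equation*}

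For (iii), a second-order Taylor expansion of $V$ at $Y_1$ combined with $D^2V\succeq\lambda I_n$ gives $V(Y_1+Y_2)\geq V(Y_1)+\nabla V(Y_1)\cdot Y_2+\frac{\lambda}{2}\|Y_2\|^2$; taking expectations and using $Y_2\perp Y_1$ with $\mathbb{E}[Y_2]=0$ to kill the cross term produces $\mathbb{E}[V(Y)]\geq\mathbb{E}[V(Y_1)]+\frac{\lambda}{2}\mathbb{E}[\|Y_2\|^2]$. Substituting into the bound from (ii) and adding $\frac{\lambda}{2}\mathbb{E}[\|Y_2\|^2]$ to both sides fuses the $\|Y_2\|^2$ remainder with $\mathcal{W}_2^2(P_X,P_{Y_1})$ on the left, and the bound from (i) then replaces this combined quantity by $\mathcal{W}_2^2(P_X,P_Y;R)$, yielding \eqref{eq:HWI2}. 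The main technical obstacle is (ii): one must carefully redo Bolley's infinitesimal-displacement-convexity computation when $P_X\neq\mu$, verifying that the AM--GM step still produces the dimensional exponential $n\,e^{(h(Y_1)-h(X))/n}$ when the initial density is not $e^{-V}$ and that the nonzero boundary derivative at $s=0$ collapses cleanly into $\mathcal{W}_2(P_X,P_{Y_1})\sqrt{I(P_X|\mu)}$ rather than contaminating the other terms.
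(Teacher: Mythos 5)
Your proposal is correct and follows essentially the same route as the paper: the same admissible coupling $Y=T(X)+Y_2$ built from the Brenier map onto $P_{Y_1}$ (with the identical mutual-information check $I(X;Y)=h(Y)-h(Y_2)\leq R$), the same AM--GM/Monge--Amp\`ere step producing the dimensional term $n\,e^{(h(Y_1)-h(X))/n}$, and the same Cauchy--Schwarz control of the first-order (non-vanishing) term by $\mathcal{W}_2(P_X,P_{Y_1})\sqrt{I(P_X|\mu)}$. The only, harmless, difference is organizational: you invoke the strong convexity of $V$ twice (from $x$ to $y_1$ and from $y_1$ to $y$, splitting the cost as $\mathcal{W}_2^2(P_X,P_{Y_1})+\mathbb{E}[\|Y_2\|^2]$), whereas the paper applies the Taylor bound once directly to the pair $(x,y)$ and lets the $Y_2$-dependence cancel inside the integral via independence and $\mathbb{E}[Y_2]=0$.
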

\begin{proof}
See Appendix \ref{pf:1}.
\end{proof}

The next result gives a new Talagrand-type inequality.
\begin{theorem}[Talagrand-type Inequality]
\label{th:1}
Let $\mathcal{X}=\mathcal{Y}=\mathbb{R}^n$. Let $dP_X = e^{-V(x)}dx$, where $V: \mathbb{R}^n \rightarrow \mathbb{R}$ is $C^2$ continuous, $D^2V \geq \lambda I_n$ with $\lambda > 0$, $P_Y \ll P_X$. We have
\begin{align}
	\frac{\lambda}{2}\mathcal{W}_2^2(P_X,P_Y;R) \leq &\mathbb{E}[V(Y)] - \mathbb{E}[V(X)] + n\nonumber\\
	&- nC(P_Y, R)e^{\frac{1}{n}(h(Y) - h(X))},
	\label{eq:ETalagrand}
\end{align}
where $C(P_Y, R) \in [0,1]$ is a numerical term for the given $P_Y$ and $R\geq 0$. 
\end{theorem}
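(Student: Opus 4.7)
The plan is to derive Theorem \ref{th:1} as a direct consequence of the HWI-type inequality of Theorem \ref{th:HWI}, by specializing the reference measure and then optimizing over admissible splittings of $Y$.

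First I would take $\mu = P_X$ in Theorem \ref{th:HWI}. Because $P_X \ll P_X$ with Radon--Nikodym derivative identically equal to $1$, the relative Fisher information $I(P_X|\mu)$ vanishes, so the residual term $\mathcal{W}_2(P_X,P_{Y_1})\sqrt{I(P_X|\mu)}$ drops out. For every decomposition $Y = Y_1 + Y_2$ with $Y_1, Y_2$ independent, $\mathbb{E}[Y_2]=0$ and $h(Y)-h(Y_2)\le R$, Theorem \ref{th:HWI} then reads
\begin{equation*}
\frac{\lambda}{2}\mathcal{W}_2^2(P_X,P_Y;R) \leq \mathbb{E}[V(Y)] - \mathbb{E}[V(X)] + n - n\,e^{\frac{1}{n}(h(Y_1)-h(X))}.
\end{equation*}

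Next I would factor the exponent as $h(Y_1) - h(X) = (h(Y_1)-h(Y)) + (h(Y)-h(X))$, so that the intrinsic quantity $e^{(h(Y)-h(X))/n}$ is isolated from the decomposition-dependent multiplier $e^{(h(Y_1)-h(Y))/n}$. Since the bound holds for every admissible $(Y_1,Y_2)$, I take the infimum of the right-hand side, which is equivalent to replacing the multiplier by
\begin{equation*}
C(P_Y,R) := \sup e^{\frac{1}{n}(h(Y_1)-h(Y))},
\end{equation*}
where the supremum ranges over all admissible pairs. This immediately produces the claimed inequality \eqref{eq:ETalagrand}. The containment $C(P_Y,R)\in[0,1]$ then needs to be checked: the upper bound follows from the chain $h(Y_1)=h(Y_1\mid Y_2)=h(Y\mid Y_2)\le h(Y)$, where the first equality uses independence of $Y_1$ and $Y_2$, the second translation invariance of conditional differential entropy, and the final inequality is ``conditioning reduces entropy''; the lower bound is automatic from positivity of the exponential, with $C=0$ attained by the degenerate splitting $Y_1\equiv\mathbb{E}[Y]$, $Y_2=Y-\mathbb{E}[Y]$.

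I do not expect serious difficulty at this level of generality: the only subtlety is that the admissible class of decompositions must be nonempty, which is already ensured by the trivial splitting just used. What is substantially harder---and deferred to the analysis of Theorem \ref{th:ETalagrand2} and its corollaries---is extracting a quantitative, distribution-independent bound on $C(P_Y,R)$. That step is where EPI truly enters, since $N(Y)\ge N(Y_1)+N(Y_2)$ combined with $h(Y_2)\ge h(Y)-R$ yields $e^{2(h(Y_1)-h(Y))/n}\le 1-e^{-2R/n}$, with saturation achieved by a Gaussian $Y$ and a proportional Gaussian split. For the present statement, however, only the qualitative range $C(P_Y,R)\in[0,1]$ is required, so the quantitative step can be postponed.
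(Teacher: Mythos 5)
Your proposal is correct and follows essentially the same route as the paper: specialize Theorem \ref{th:HWI} to $\mu = P_X$ so that $I(P_X|\mu)=0$, factor the exponent as $h(Y_1)-h(X) = (h(Y_1)-h(Y)) + (h(Y)-h(X))$, and absorb the splitting-dependent factor into $C(P_Y,R)=e^{\frac{1}{n}(h(Y_1)-h(Y))}$. The only (harmless) differences are that you optimize over admissible splittings and verify $C\in[0,1]$ via conditioning-reduces-entropy inside the proof, whereas the paper fixes one splitting and defers the range check to Remark \ref{rm:3}, where it is obtained from the EPI.
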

\begin{proof}
Let $dP_X = e^{-V}$ in \eqref{eq:HWI2}. In such case, we have $I(P_{X}|\mu)=0$ from the definition of relative fisher information. Take $C(P_Y, R) = e^{\frac{1}{n}(h(Y_1) - h(Y))}$, then \eqref{eq:ETalagrand} is proved from \eqref{eq:HWI2}. 
\end{proof}
Next, we state some technical remarks on Theorem \ref{th:1}.
\begin{remark}[On Theorem \ref{th:1}]
\label{rm:2}
The equality of \eqref{eq:Taylor} holds when $P_X$ is isotropic Gaussian, i.e., $P_X \sim N(\mu, \sigma^2I_n)$ for some $\mu\in \mathbb{R}^n$ and $\sigma > 0$. The equality in Lemma \ref{lemma:1} holds when $\nabla \varphi$ is affine and $D^2\varphi$ has identical eigenvalues, i.e., $\nabla \varphi = k\cdot Id,\, k \in \mathbb{R}$, see \cite[Lemma 2.6]{bolley2018dimensional}. From \cite[Theorem 1]{janati2020entropic} we know that the linear combination $Y = Y_1 +Y_2$ in Theorem \ref{th:HWI} is the optimizer for entropic OT when $X$ and $Y$ are isotopic Gaussian. In such case, the equality of \eqref{eq:ETalagrand} holds and $C(\cdot,R) = \sqrt{1-e^{-\frac{2}{n}R}}$.
\end{remark}

\begin{remark}[On the numerical term $C(\cdot,\cdot)$]\label{rm:3}
We  observe that $e^{\frac{1}{n}h(Y_1)}$ has the form of a square root of entropy power. Using EPI and the fact that $N(\cdot) \geq 0$, we have
\begin{equation*}
	N(Y) \geq N(Y_1)+N(Y_2) \geq N(Y_1).
\end{equation*}
Therefore $C=e^{\frac{1}{n}(h(Y_1) - h(Y))} = \sqrt{N(Y_1)/N(Y)}\in [0,1]$. When $R = 0$, $Y = Y_2$, the density of $Y_1$ is $\delta(x)$. It means that $e^{\frac{1}{n}h(Y_1)} = 0$, hence $C=0$. When $R = 1$, $Y = Y_1$, $e^{\frac{1}{n}h(Y_1)} = e^{\frac{1}{n}h(Y)}$, hence $C=1$. Now assume that we have $R' < R$. Then $\Pi(P_X, P_Y; R') \subset \Pi(P_X, P_Y; R)$ and the minimization problem \eqref{eq:PR} directly leads to $\mathcal{W}_2(P_X, P_Y; R') \geq \mathcal{W}_2(P_X, P_Y; R)$. Note that $\mathcal{W}_2(P_X, P_Y; R)$ can be also bounded by $C(P_Y, R')$. Hence $C(P_Y, R')\leq C(P_Y, R)$, i.e., $C(\cdot,\cdot)$ is a monotonic non-decreasing function with respect to $R$. Moreover, $C(\cdot, 0) = 0$, $C(\cdot, +\infty) = 1$ for all $P_Y$.

We note that for particular distributions, we may have explicit expression of $C(\cdot,\cdot)$. When $P_Y$ is Gaussian, we can always take the linear combination $Y = Y_1 + Y_2$, where $Y_1$ and $Y_2$ are independent Gaussian and have proportional covariance matrices. EPI is saturated in this case, i.e.,
\begin{equation*}
   \begin{split}
	e^{\frac{2}{n}h(Y_1)} &= e^{\frac{2}{n}h(Y)} - e^{\frac{2}{n}h(Y_2)}\\
	&= (1-e^{-\frac{2}{n}R})e^{\frac{2}{n}h(Y)}.
   \end{split}
   \end{equation*}
Then we have $C(P_Y, R) = e^{\frac{1}{n}(h(Y_1) - h(Y))} = \sqrt{1-e^{-\frac{2}{n}R}}$. For Cauchy distribution $Cauchy(x_0, \gamma)$, its differential entropy is $\log(4\pi \gamma)$. The summation of independent Cauchy random variables $\sum^n_iCauchy(x_i, \gamma_i) \sim Cauchy(\sum^n_ix_i, \sum^n_i\gamma_i)$. When $Y$ is i.i.d. Cauchy, i.e., $(Y)_i\sim Cauchy(x_0, \gamma)$, we take $(Y_1)_i\sim Cauchy(x_0, \frac{1}{4\pi}e^{\frac{1}{n}h(Y)}\cdot (1-e^{-\frac{R}{n}}))$ and $(Y_2)_i\sim Cauchy(0, \frac{1}{4\pi}e^{\frac{1}{n}h(Y)}\cdot e^{-\frac{R}{n}})$. We can see this linear combination satisfies our assumption $h(Y)-h(Y_2)\leq R$ and  $C(P_Y, R) =  e^{\frac{1}{n}(h(Y_1) - h(Y))}=1-e^{-\frac{R}{n}}$.

Note that the linear combination $Y = Y_1+Y_2$ is not unique, according to the assumption of Theorem \ref{th:HWI}. Consequently, it leads to the non-uniqueness of $C(\cdot,\cdot)$. In order to obtain the tightest bound in \eqref{eq:ETalagrand}, the optimal $C^*(P_Y, R) = \sup e^{\frac{1}{n}(h(Y_1) - h(Y))}$ subject to $Y_1+Y_2=Y$ and $h(Y)-h(Y_2)\leq R$. To look into this optimization problem, we introduce Courtade's reverse EPI \cite{courtade2017strong} as follows. If we have independent $X$ and $Y$ with finite second moments and choose $\theta$ to satisfy $\theta/(1-\theta)=N(Y)/N(X)$, then
\begin{equation}
\label{eq:repi}
N(X+Y) \leq (N(X) + N(Y))(\theta p(X) + (1-\theta) p(Y)),
\end{equation}
where $p(X) := \frac{1}{n}N(X)J(X)\geq 1$ is Stam defect. $p(X)$ is affine invariant, i.e. $p(X) = p(tX)$, $t>0$ because $t^2N(X) = N(tX)$ and $J(X) = t^2J(X)$. The equality $p(X)=1$ holds only if $X$ is Gaussian. In our case, $\theta=N(Y_1)/(N(Y_1)+ N(Y_2))$. When $\theta \rightarrow 1$, (\ref{eq:repi}) becomes
\begin{equation*}
N(Y) \lesssim (N(Y_1)+ N(Y_2))\cdot p(Y_2).
\end{equation*}
It means the saturation of EPI is controlled by $p(Y_2)$ when the noise $Y_2$ is small, i.e., $R$ is large. In this case, $C^*(P_Y, R)  \approx \sqrt{1-e^{-\frac{2}{n}R}}$ if we let $Y_2$ close to Gaussian, i.e., $p(Y_2) =1$.  On the other hand, when $\theta \rightarrow 0$, EPI can also be saturated if we let $Y_1$ close to Gaussian.

In Fig. \ref{fig:Fig1}, we illustrate the numerical simulations of $C(\cdot,\cdot)$. For general distributions beyond Gaussian and i.i.d. Cauchy, we approximate $C(\cdot,\cdot)$ using kernel methods of deconvolution, see, e.g., \cite{stefanski1990deconvolving,fan1991optimal}. Our strategy of deconvolution in Fig. \ref{fig:Fig1} is to let $Y_2 = tY'$, where $Y'$ is a copy of $Y$ and $t \in [0,1]$. Gaussian mixture is an exception for this strategy because its spectrum would be not integrable. Instead we let $Y_2$ to be Gaussian for Gaussian mixture. We note this strategy is mostly not optimal and the optimal way to maximize the entropy power above is still an open question. 
\end{remark}
\begin{figure}[h]
\begin{center}
\includegraphics[trim=1.0cm 10.3cm 0.5cm 9.2cm, clip=true, totalheight=0.18\textheight,width=0.5\textwidth]{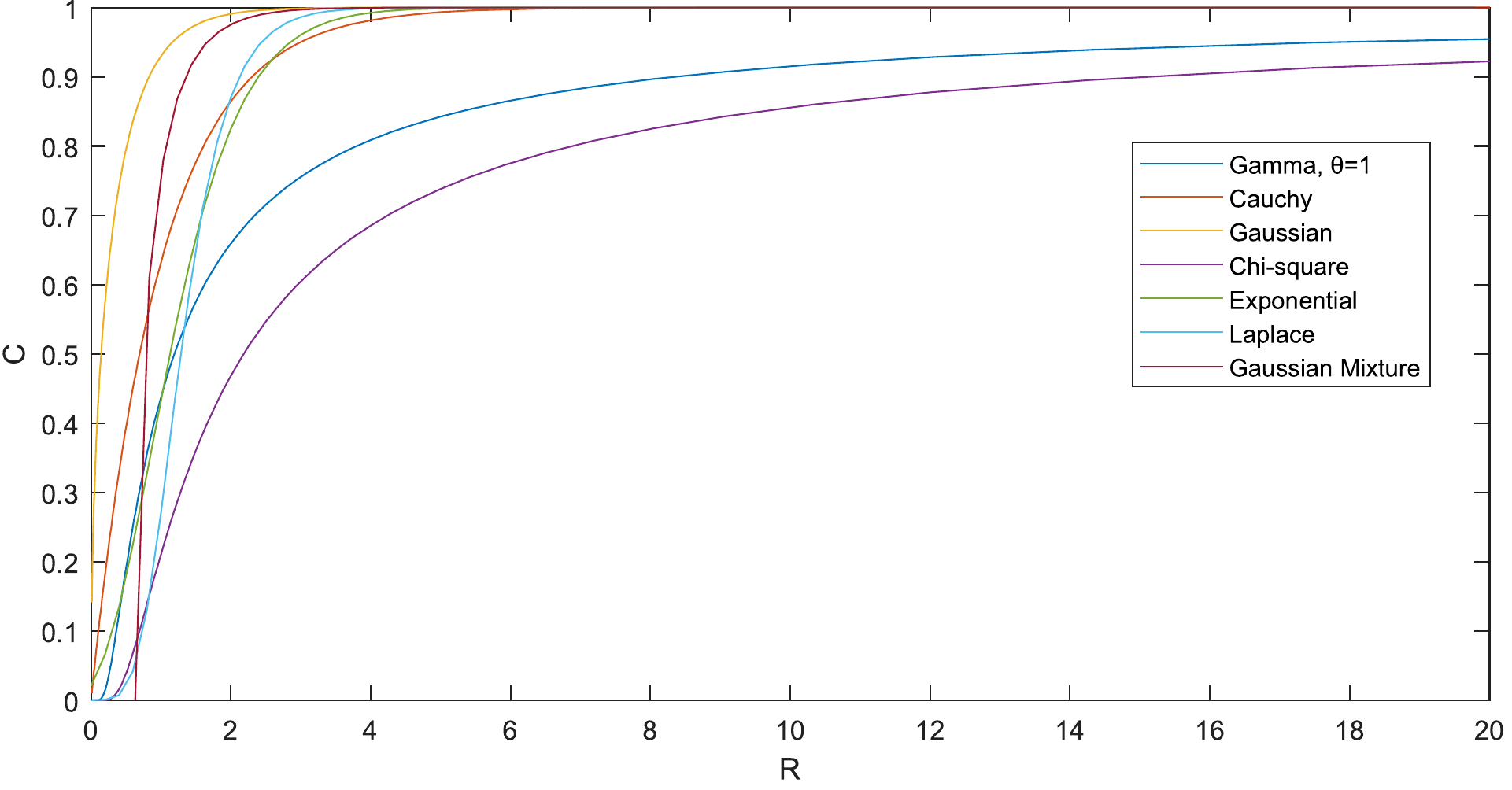}
\end{center}
\vspace{-0.6cm}
\caption {Plot of the numerical term C subject to the information constraint R  evaluated with respect to different distributions for the one dimensional case.}
\label{fig:Fig1}
\end{figure}

The following corollary is immediate from Theorem \ref{th:1}.
\begin{corollary}
\label{cor:1}
Wasserstein distance is bounded by
\begin{equation}
\label{eq:uncon}
	\frac{\lambda}{2}\mathcal{W}_2^2(P_X,P_Y) \leq \mathbb{E}[V(Y)] - \mathbb{E}[V(X)] + n - ne^{\frac{1}{n}(h(Y) - h(X))}.
\end{equation}
\end{corollary}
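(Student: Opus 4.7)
The plan is to obtain this as the unconstrained limit $R\to\infty$ of Theorem \ref{th:1}. First, I would observe that imposing a mutual-information bound shrinks the feasible set, $\Pi(P_X,P_Y;R)\subset\Pi(P_X,P_Y)$, so $\mathcal{W}_2(P_X,P_Y)\le\mathcal{W}_2(P_X,P_Y;R)$ for every $R\ge 0$. Chaining this with Theorem \ref{th:1} gives, valid for every $R$,
$$\frac{\lambda}{2}\mathcal{W}_2^2(P_X,P_Y) \le \mathbb{E}[V(Y)] - \mathbb{E}[V(X)] + n - nC(P_Y,R)\,e^{\frac{1}{n}(h(Y)-h(X))}.$$

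Second, I would let $R\to\infty$. The left-hand side does not depend on $R$, and the right-hand side is non-increasing in $R$ because $C(P_Y,\cdot)$ is monotonically non-decreasing (Remark \ref{rm:3}) while the coefficient multiplying $C(P_Y,R)$ is negative. Hence the bound becomes tightest in the limit, and Remark \ref{rm:3} records $C(P_Y,+\infty)=1$, which collapses the right-hand side to exactly \eqref{eq:uncon}.

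The only step with actual content is the limit $C(P_Y,+\infty)=1$, and that is where I would be most careful. I would justify it directly from the decomposition $Y=Y_1+Y_2$ underlying Theorem \ref{th:HWI}: choosing $Y_2$ with $N(Y_2)\to 0$ (for instance $Y_2=tY'$ as $t\to 0$, where $Y'$ is an independent copy of $Y$, as proposed in the discussion after Remark \ref{rm:3}) forces $h(Y_1)\to h(Y)$, hence $C=e^{(h(Y_1)-h(Y))/n}\to 1$, and the constraint $h(Y)-h(Y_2)\le R$ is automatically met for $R$ sufficiently large. No further obstacle arises; modulo this limiting identity, the corollary is a one-line consequence of Theorem \ref{th:1}.
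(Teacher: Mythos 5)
Your proposal is correct and follows essentially the same route as the paper: the paper's own proof is the one-line observation that the corollary is immediate from Theorem \ref{th:1} as $R \rightarrow \infty$, using $C(\cdot,+\infty)=1$ from Remark \ref{rm:3}. Your additional care in justifying the monotonicity in $R$ and the limit $C(P_Y,+\infty)=1$ via a vanishing $Y_2$ simply fills in details the paper leaves implicit.
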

\begin{proof}
This is immediate from Theorem \ref{th:1} when $R \rightarrow \infty$.
\end{proof}
\begin{remark}
\label{rm:4}
We note that (\ref{eq:uncon}) is equivalent to Bolley's dimensional Talagrand inequality \eqref{eq:Tal3} and it is tighter than the classical Talagrand inequality (\ref{eq:Talagrand}). Notice under our assumptions, $h(X) = \mathbb{E}[V(X)]$ and $D(P_Y\|P_X) = \mathbb{E}[V(Y)] - h(Y)$ because $dP_X = e^{-V(x)}dx$. Clearly, by substituting these expressions to the last term of \eqref{eq:uncon} we obtain \eqref{eq:Tal3}. Since $e^\mu \geq 1 + \mu$, (\ref{eq:Tal3}) is, in general, tighter than the classical Talagrand inequality (\ref{eq:Talagrand}), i.e., R.H.S. of (\ref{eq:Tal3}) $\leq$ R.H.S. of (\ref{eq:Talagrand}). The equality holds if and only if $h(Y) = h(X)$.
\end{remark}
\begin{remark}[On measure concentration]\label{rm:concen}
We notice that $C$ is the only difference between \eqref{eq:Tal3} and \eqref{eq:ETalagrand}, from Remark \ref{rm:4}. Therefore, we can immediately get a result of measure concentration following \cite[Corollary 2.4]{bolley2018dimensional}. 

Here we state the result of measure concentration obtained from \eqref{eq:ETalagrand}. Let $d\mu = e^{-V}$, where $V: \mathbb{R}^n \rightarrow \mathbb{R}$ is $C^2$ continuous, $D^2V \geq \lambda I_n$ with $\lambda > 0$.  Let $A \subset \mathbb{R}^n$, $A_r := \{x\in \mathbb{R}^n| \forall y\in A, \|x-y\|> r\}$ for $r\geq 0$ and $c_A := \sqrt{2\lambda^{-1}\log(1/\mu(A))}$. Then for $r\geq c_A$,
\begin{equation}\label{eq:concen}
\mu(A_r) \leq C^{-n}\cdot e^{-\frac{\lambda}{2}(r - c_A)^2}.
\end{equation}
This inequality obtained from Sinkhorn distance is dimension dependent, compared to the dimension-free one in \cite{bolley2018dimensional}. Note that $C\in [0,1]$. Hence, the increment of dimension leads to a slower concentration, i.e., a looser bound in \eqref{eq:concen}.
\end{remark}

The next theorem is another Talagrand-type inequality. Compared to Theorem 3, the following theorem is a bound obtained using a term related to the saturation of $P_X$ instead of the saturation of $P_Y$ that was used in Theorem \ref{th:1}.
\begin{theorem}
\label{th:ETalagrand2}
Let $\mathcal{X}=\mathcal{Y}=\mathbb{R}^n$. Without loss of generality, let $X$ be a zero-mean random vector with density $e^{-V(x)}$, where $V: \mathbb{R}^n \rightarrow \mathbb{R}$ is $C^2$ continuous, $D^2V \geq \lambda I_n$ with $\lambda > 0$, $P_Y \ll P_X$. Then we have
\begin{align}
	\frac{\lambda}{2}\mathcal{W}_2^2(P_X,P_Y;R) &\leq \mathbb{E}[V(Y)] - \mathbb{E}[V(X)] + n\nonumber\\
	 &- nC_x(P_X, R)e^{\frac{1}{n}(h(Y) - h(X))} + \epsilon,
	\label{eq:ETalagrand2}
\end{align}
where $\epsilon$ is a term related to the linearity of $V$.
\end{theorem}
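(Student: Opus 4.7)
The plan is to derive the inequality by choosing the $Y$-decomposition inside the HWI-type inequality of Theorem \ref{th:HWI} so that the resulting entropy power reflects the saturation of EPI on the $X$-side instead of the $Y$-side, with $\epsilon$ collecting the non-Gaussian remainder that arises when the optimal transport map is not affine.

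First, I would set $\mu=P_X$ in \eqref{eq:HWI2} so that the Fisher-information remainder $\mathcal{W}_2(P_X,P_{Y_1})\sqrt{I(P_X|\mu)}$ vanishes, leaving
\[
\tfrac{\lambda}{2}\mathcal{W}_2^2(P_X,P_Y;R)\leq \mathbb{E}[V(Y)]-\mathbb{E}[V(X)]+n-n\,e^{(h(Y_1)-h(X))/n}.
\]
The task is then to produce a $Y$-decomposition $Y=Y_1+Y_2$ meeting the hypotheses of Theorem \ref{th:HWI} for which $e^{h(Y_1)/n}$ can be lower-bounded by $C_x(P_X,R)\,e^{h(Y)/n}$, with $C_x$ depending on $P_X$ only.

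Second, I would pick an $X$-decomposition $X=X_1+X_2$ with independent $X_i$, $\mathbb{E}[X_2]=0$, $h(X)-h(X_2)\leq R$, and transport it through the Brenier map $T$ that pushes $P_X$ to $P_Y$, defining $Y_1=T(X_1)$ and $Y_2=T(X)-T(X_1)$. When $T$ is affine — the Gaussian reference case — $Y_2=DT(0)X_2$, the $Y_i$ are independent, and $h(Y_1)-h(Y)=h(X_1)-h(X)$, so one reads off $C_x(P_X,R)=\sqrt{N(X_1)/N(X)}$ and the inequality reduces exactly to \eqref{eq:Tal2} (Remark \ref{cor:2}) with $\epsilon=0$. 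For general strongly log-concave $V$ the map $T$ is only close-to-affine, so both the independence of $Y_1,Y_2$ and the entropy identity fail by amounts controlled by a Taylor remainder of $T$. I would linearise $T$ around $\mathbb{E}[X]=0$ and use the bound $D^2V\ge\lambda I_n$ (which yields Lipschitz control on $DT$) to quantify this residual, and then collect it, together with the mismatch $\mathbb{E}[V(X_1)]+\mathbb{E}[V(X_2)]-\mathbb{E}[V(X)]$ — which also vanishes for quadratic $V$ by independence and zero mean — into the single term $\epsilon$.

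The main obstacle will be making $\epsilon$ precise: it must vanish in the Gaussian/quadratic case so that \eqref{eq:ETalagrand2} recovers \eqref{eq:Tal2}, remain controllable by a curvature-type quantity of $V$ in the general case, and be compatible with the exact budgets $\mathbb{E}[Y_2]=0$ and $h(Y)-h(Y_2)\leq R$ after the nonlinear push-forward. In particular, $\mathbb{E}[Y_2]=0$ is automatic only when $T$ is affine, so in the nonlinear regime a small deterministic shift will likely be required and its contribution to $\mathbb{E}[V(Y)]$ absorbed into $\epsilon$ as well; keeping this shift and the linearisation error simultaneously small is the delicate part of the argument.
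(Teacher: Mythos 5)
Your plan has a genuine gap at its foundation: the coupling you construct is not admissible for the Sinkhorn distance. If you set $Y_1=T(X_1)$ and $Y_2=T(X)-T(X_1)$ with $X=X_1+X_2$, then $Y=Y_1+Y_2=T(X)$ is a deterministic function of $X$, so $I_P(X;Y)=+\infty$ and the coupling lies outside $\Pi(P_X,P_Y;R)$ for every finite $R$. Moreover $Y_1$ and $Y_2$ are not independent ($Y_2=T(X_1+X_2)-T(X_1)$ depends on $X_1$), so the hypotheses of Theorem \ref{th:HWI} are not met; the identity $I_P(X;Y)=h(Y)-h(Y_2)$ used there relies precisely on that independence. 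Neither defect can be repaired by absorbing a remainder into an additive $\epsilon$: admissibility of the coupling is a hard constraint on the feasible set, not a term in the bound.

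The paper's proof sidesteps both problems by randomizing on the $X$ side \emph{in front of} the transport map rather than pushing an $X$-decomposition through it. It writes $X=tX'+X_2$ with $X'$ a copy of $X$, $X_2$ an independent zero-mean noise with $h(X)-h(X_2)\le R$, and couples $Y=\nabla\varphi(X')$, where $\nabla\varphi$ is the Brenier map from $P_X$ to $P_Y$. Because $Y$ is a function of $X'$ and not of $X$, data processing gives $I_P(X;Y)\le I(X;X')=h(X)-h(X_2)\le R$, so the coupling is admissible. The bound then follows from \eqref{eq:AA1} together with Lemma \ref{lemma:2} and \eqref{eq:1:1} applied to $\int\Delta\varphi\,dP_{X'}$, which already yields $n e^{\frac{1}{n}(h(Y)-h(X))}$ with the full entropy of $Y$ --- no $Y$-side independent decomposition is ever needed. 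The constant is simply $C_x=t$, and $\epsilon=-\iint(\nabla V(tx'+x_2)-t\,\nabla V(x'))\cdot\nabla\varphi(x')\,dP_{X'}\,dP_{X_2}$ measures the non-linearity of $\nabla V$, not of the transport map. Your intuition that something must vanish in the Gaussian case is right, but the object that must be (nearly) linear is $\nabla V$, and your construction places the randomization on the wrong side of the map.
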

\begin{proof}
See Appendix \ref{pf:3}.
\end{proof}
We make the following technical comments on Theorem \ref{th:ETalagrand2}.
\begin{remark}[On Theorem \ref{th:ETalagrand2}]
\label{rm:5}
Similar to $C(P_Y, R)$, $C_x(P_X, R)$ is also related to the saturation of EPI, as shown in the proof.  However, \eqref{eq:ETalagrand2}  is less natural than \eqref{eq:ETalagrand} because of the extra term $\epsilon$. When $\nabla V$ is nearly linear, $\epsilon$ should be small. When $\nabla V$ is far from linear, $\epsilon$ is unknown. 
\end{remark}

\begin{remark}\label{cor:2}
When $\nabla V$ is linear, $\epsilon$ is zero and $C_x(\cdot,R) = \sqrt{1-e^{-\frac{2}{n}R}}$, as simply taking $t = \sqrt{1-e^{-\frac{2}{n}R}}$ in the proof. In such case, \eqref{eq:ETalagrand2} recovers \eqref{eq:Tal2} by taking $X$ as a standard Gaussian, i.e. $V(x) = \|x\|^2/2 + k$, where $k$ is a normalization factor. Substitute $V$ and times 2 on both sides of (\ref{eq:ETalagrand2}), we have
  \begin{align}
	&\mathcal{W}_2^2(X,Y; R)\nonumber\\ \leq& \mathbb{E}[\|Y\|^2] - \mathbb{E}[\|X\|^2] + 2n - 2n\sqrt{1-e^{-\frac{2}{n}R}}e^{\frac{1}{n}(h(Y) - h(X))}\nonumber\\
	=& \mathbb{E}[\|Y\|^2] +n  - 2n\sqrt{\frac{1}{2\pi e}(1-e^{-\frac{2}{n}R})}e^{\frac{1}{n}h(Y)}\label{eq:compare}.
 \end{align}

\end{remark}

\section{Simulation and Discussion}
We simulate a relatively tight situation for bound \eqref{eq:ETalagrand} in Fig. \ref{fig:strLog}, using \cite{flamary2021pot}. The bound can be loose in some scenarios, e.g., the term $C$ is not optimal, or $P_Y$ is not absolutely continuous to $P_X$, which is, for instance, a reason of mode collapse in GAN training \cite{arjovsky2017wasserstein}. From Remark \ref{rm:concen}, it can be also seen that the information constraint causes more smoothing on higher dimension for the original Wasserstein distance. We know that, in machine learning, samples are usually embeded in a low dimension manifold with the disturbance of a high dimension noise. The dimensionality provides a wider boundary for decision for small disturbance on high dimension and may be one of the explanation that Sinkhorn distance outperform other metrics in \cite{cuturi2013sinkhorn}.
\begin{figure}[h]
\centering
        \includegraphics[trim=1cm 20cm 3.5cm 1cm, clip=true, totalheight=0.18\textheight,width=0.5\textwidth]{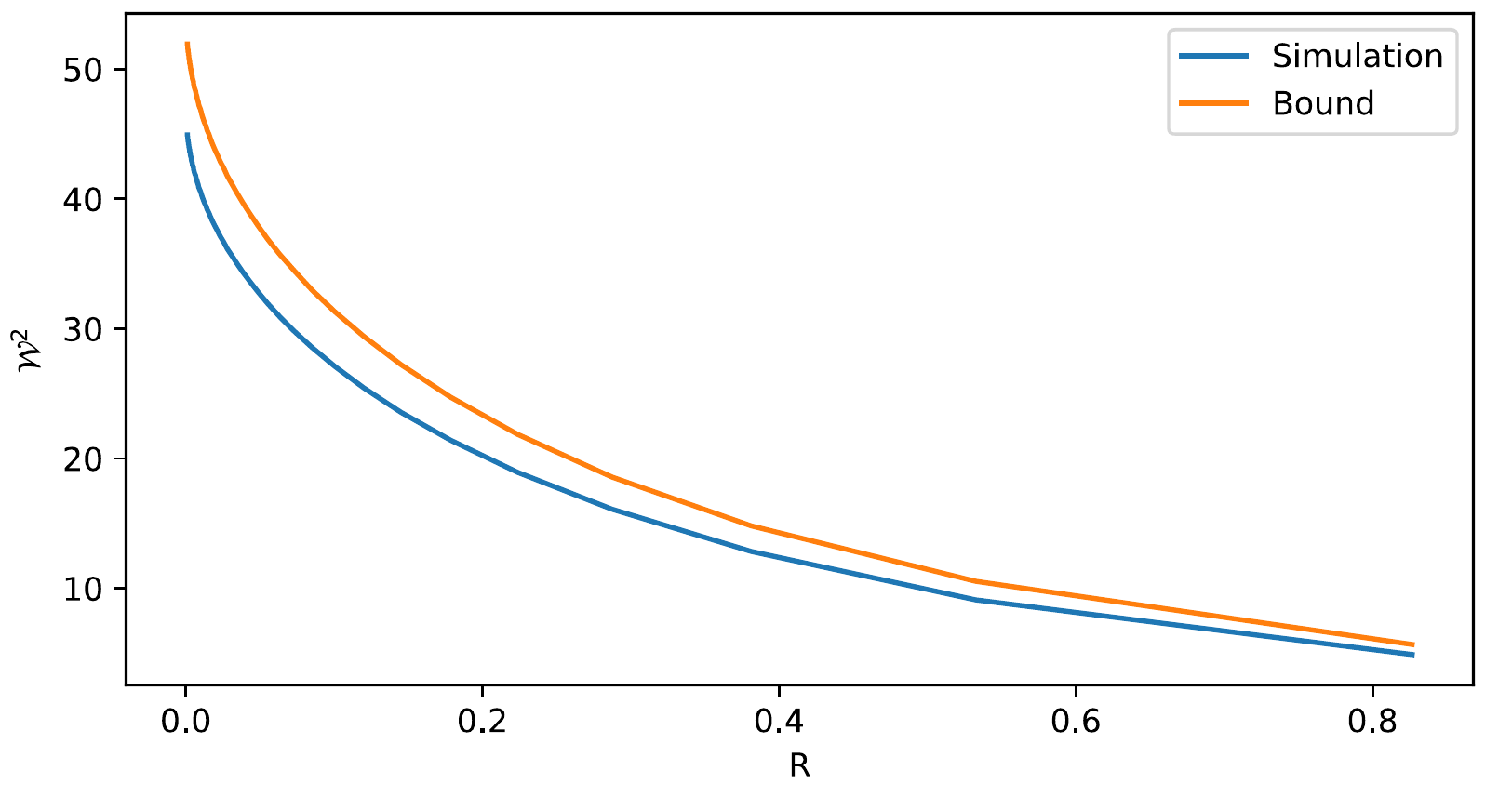}
        \vspace{-0.8cm}
        \caption {Bound \eqref{eq:ETalagrand} for $dP_X = e^{-V}, V = (x/5)^2/2 + |x/10| + e^{-|x/10|} +k, k \in \mathbb{R}$ and $dP_Y \sim \mathcal{N}(0,\frac{1}{25})$.}
        \label{fig:strLog}
\end{figure}

\appendices
\section{Proof of Theorem \ref{th:HWI}}\label{pf:1}
For a $C^2$ continuous function $V: \mathbb{R}^n \rightarrow \mathbb{R}$, $D^2V \geq \lambda I_n$, by Taylor formula \cite[Lemma 2.5]{bolley2018dimensional}, there exists a $t\in [0,1]$ satisfying
\begin{align}
&V(y) - V(x)\nonumber \\
=& \nabla V(x)\cdot (y-x) + (y-x)\cdot D^2V(tx+(1-t)y) (y-x)/2\nonumber\\
\geq& \nabla V(x)\cdot (y-x) + \frac{\lambda}{2}\|y-x\|^2\label{eq:Taylor}.
\end{align}
Hence we can bound the second order cost by
\begin{align}
&\frac{\lambda}{2}\int_{\mathcal{X}\times\mathcal{Y}}\|y-x\|^2\,dP\nonumber\\
 \leq &\int_{\mathcal{X}\times\mathcal{Y}}V(y) - V(x) - \nabla V(x)\cdot (y-x)\,dP\label{eq:AA1}.
\end{align}
Because entropic OT is a minimization problem, we can take any case in $\Pi(P_X,P_Y;R)$ to bound $\mathcal{W}_2(P_X,P_Y;R)$. We take a linear combination $Y = Y_1+Y_2$, where $Y_1$ and $Y_2$ are independent, $h(Y)-h(Y_2)\leq R$ and $\mathbb{E}[Y_2]=0$. Assume there is a Brenier map between $Y_1$ and $X$, i.e., $Y_1 = \nabla\varphi(X)$, which always exists, according to Theorem \ref{Brenier} (see Appendix \ref{apd:1}). Then, we can see this special case is in $\Pi(P_X,P_Y,R)$, namely,
\begin{align}
	I_P(X;Y) &= h(Y) - h(Y|X)\nonumber\\
	&= h(Y)- h(Y_1+Y_2|X)\nonumber\\
	&=  h(Y) - h(Y_2)\nonumber\\
	&\leq R\nonumber.
\end{align}
Let $d\mu = e^{-V}$, where $V: \mathbb{R}^n \rightarrow \mathbb{R}$ is $C^2$ continuous, $D^2V \geq \lambda I_n$. In order to bound Sinkhorn distance, we just need to bound $\int_{\mathcal{X}\times\mathcal{Y}}\nabla V(x)\cdot (y-x)\,dP$, according to \eqref{eq:AA1}. This term can be bounded as follows,
\begin{align}
  	 &\int_{\mathcal{X}\times\mathcal{Y}}\nabla V(x)\cdot (y-x)\,dP\nonumber\\
	 =& \iint \nabla V(x)\cdot (\nabla \varphi(x) + y_2 - x)\,dP_{Y_2}\,dP_{X}\nonumber\\
	 =& \int \nabla V(x)\cdot (\nabla \varphi(x) - x)\,dP_{X}\nonumber\\
	 =& \int \nabla V(x)\cdot (\nabla \varphi(x) - x)\frac{dP_{X}}{d\mu}\,d\mu\nonumber\\
	 \geq& \int \Delta\varphi(x) f \,d\mu - n + \int (\nabla\varphi(x) -x)\cdot \nabla f\,d\mu\label{eq:pf2}\\
	 \geq& ne^{\frac{1}{n}(h(Y_1) - h(X))}-n - \mathcal{W}_2(P_{X}, P_{Y_1})\cdot\sqrt{I(P_{X}|\mu)} \nonumber,
\end{align}
where we take the Radon-Nikodym derivative $f = \frac{dP_{X}}{d\mu}$ in \eqref{eq:pf2} and apply Lemma \ref{lemma:1} in Appendix \ref{apd:1}. This completes the derivation.

 \section{Proof of Theorem \ref{th:ETalagrand2}}\label{pf:3}
Let $X'$ be a copy of $X$. $X$ can be written as a linear combination $X = tX' + X_2$, where $X_2$ is zero-mean and independent with $X'$, $h(X)-h(X_2)\leq R$, $t\in[0,1]$. Assume there exists a Brenier map $Y = \nabla\varphi(X')$. Similar to the proof of Theorem $\ref{th:1}$, this case is also in $\Pi(P_X,P_Y,R)$. Then we have
\begin{align}
 &\int_{\mathcal{X}\times\mathcal{Y}}\nabla V(x)\cdot (y-x)\,dP\nonumber\\
 = &\iint \nabla V(tx'+ x_2)\cdot \nabla \varphi(x')\,dP_{X'}\,dP_{X_2}-n\label{eq:4:3}\\
= &\iint (\nabla V(tx'+ x_2) - t\cdot\nabla V(x'))\cdot \nabla \varphi(x')\,dP_{X'}\,dP_{X_2}\nonumber\\&+ t\int \nabla V(x')\cdot \nabla \varphi(x')\,dP_{X'}-n\nonumber\\
= &\iint (\nabla V(tx'+ x_2) - t\cdot\nabla V(x'))\cdot \nabla \varphi(x')\,dP_{X'}\,dP_{X_2}\nonumber \\& + t\int\Delta \varphi \,dP_{X'}-n\label{eq:4:2}\\
\geq & t\cdot ne^{\frac{1}{n}(h(Y) - h(X))} - \epsilon-n\label{eq:4:1},
 \end{align}
where we use Lemma \ref{lemma:2} of Appendix \ref{apd:1} in \eqref{eq:4:3} and \eqref{eq:4:2}. In \eqref{eq:4:1}, we let $\epsilon = -\iint (\nabla V(tx'+ x_2) - t\cdot\nabla V(x'))\cdot \nabla \varphi(x')\,dP_{X'}\,dP_{X_2}$ and apply \eqref{eq:1:1}. After changing the order of integral, we can see that $\int\nabla V(tx'+ x_2)\,dP_{X_2}$ is a smoothed version of $\nabla V(tx')$. When $\nabla V$ is a linear function perturbed by a zero mean noise, i.e., $\nabla V(tx) = t\cdot\nabla V(x) + W$, the integral of $x_2$ is cancelled out and $\epsilon = 0$. Take $C_x(P_X, R)=t$, then we finish the proof.
 
 \section{Useful Theorem and Lemmas}
 \label{apd:1}
  \begin{theorem}[Brenier]\cite[Theorem 2.12]{villani2003topics}
 \label{Brenier}
Let $P_X \in \mathcal{P}(X)$, $P_Y \in \mathcal{P}(Y)$ with $X \subset \mathbb{R}^n$, $Y \subset \mathbb{R}^n$  and  assume that $dP_X$, $dP_Y$ both have finite second moments. Then, for Kantorovich problem with cost $c(x,y) = \frac{1}{2}\|x-y\|^2$, $\exists ! \varphi: X \rightarrow \mathbb{R}$ gives the optimal coupling
\begin{equation*}
P^* = (Id \times \nabla \varphi)_{\#}P_X,
\end{equation*}
 where $\varphi$ is convex.
 \end{theorem}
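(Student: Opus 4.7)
My plan is to adapt the strategy of Theorem \ref{th:1} and the HWI-type proof of Theorem \ref{th:HWI}, except that the role of the EPI-saturating decomposition is transferred from $Y$ to $X$: instead of writing $Y = Y_1 + Y_2$ and transporting $X$ to $Y_1$ by a Brenier map, I will write $X$ as a noisy scaled copy of itself and transport the unperturbed copy directly to $Y$. The resulting scaling factor $t$ becomes $C_x(P_X, R)$, and the price paid for $\nabla V$ being nonlinear becomes the remainder $\epsilon$.

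First, as in the HWI proof, I would start from the $\lambda$-convex Taylor expansion (\ref{eq:Taylor}), integrate against an arbitrary admissible coupling $P \in \Pi(P_X, P_Y; R)$, and rearrange to get
\begin{equation*}
\frac{\lambda}{2}\mathbb{E}_P[\|Y-X\|^2] \leq \mathbb{E}[V(Y)] - \mathbb{E}[V(X)] - \mathbb{E}_P[\nabla V(X)\cdot(Y-X)].
\end{equation*}
Since $\mathcal{W}_2^2(P_X, P_Y; R)$ is an infimum over admissible couplings, any single convenient $P$ yields an upper bound. I would construct one by taking an independent copy $X'$ of $X$ and a zero-mean random vector $X_2$ independent of $X'$ such that $tX' + X_2$ has the same law as $X$ for some scalar $t \in [0,1]$ with $h(X) - h(X_2) \leq R$ (a deconvolution-type decomposition whose existence and optimal $t$ depend on $P_X$). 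Then set $X := tX' + X_2$ and $Y := \nabla\varphi(X')$, where $\varphi$ is the Brenier potential pushing $P_X$ forward to $P_Y$ provided by Theorem \ref{Brenier}. The $P_X$-marginal is correct by construction, the $P_Y$-marginal follows from Brenier's theorem, and since $\nabla\varphi$ is ($P_X$-a.e.) injective, $h(X|Y) = h(X|X') = h(X_2)$, so $I(X;Y) = h(X) - h(X_2) \leq R$ certifies admissibility.

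Next I would manipulate the cross term. Using the identity $\mathbb{E}[X\cdot\nabla V(X)] = n$, which follows from a single integration by parts against $e^{-V}$ with vanishing boundary terms by strong log-concavity, I split
\begin{align*}
\mathbb{E}_P[\nabla V(X)\cdot(Y-X)] &= \mathbb{E}[\nabla V(tX'+X_2)\cdot\nabla\varphi(X')] - n\\
&= t\,\mathbb{E}[\nabla V(X')\cdot\nabla\varphi(X')] - \epsilon - n,
\end{align*}
where $\epsilon := -\mathbb{E}[(\nabla V(tX'+X_2) - t\,\nabla V(X'))\cdot\nabla\varphi(X')]$ is the promised ``non-linearity of $V$'' remainder: when $\nabla V$ is linear (as in the Gaussian case $V(x)=\|x\|^2/2$) we have $\nabla V(tx'+x_2) - t\nabla V(x') = \nabla V(x_2)$, and integrating against $P_{X_2}$ kills $\epsilon$ by independence and the zero-mean assumption. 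A second integration by parts converts $\mathbb{E}[\nabla V(X')\cdot\nabla\varphi(X')]$ into $\mathbb{E}[\Delta\varphi(X')]$, and Lemma \ref{lemma:1} (the AM--GM / log-determinant bound applied to the Monge--Amp\`ere equation $e^{-V(x)} = e^{-V(\nabla\varphi(x))}\det D^2\varphi(x)$) yields $\mathbb{E}[\Delta\varphi(X')] \geq n\,e^{(h(Y)-h(X))/n}$. Setting $C_x(P_X, R) := t$ and assembling the pieces delivers (\ref{eq:ETalagrand2}).

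The main obstacle is the deconvolution step: the decomposition $tX' + X_2 \stackrel{d}{=} X$ is explicit and optimal in the Gaussian case (yielding $t = \sqrt{1-e^{-2R/n}}$), but for general log-concave $P_X$ one must argue its existence and identify the best admissible $t$ under the constraint $h(X) - h(X_2) \leq R$ — a genuine open question as discussed in Remark \ref{rm:3} via Courtade's reverse EPI. A secondary technical point is justifying the two integration-by-parts steps, for which strong log-concavity of $V$ provides the required Gaussian-type decay of $e^{-V}$. Finally, $\epsilon$ is genuinely uncontrolled away from the affine-$\nabla V$ regime, which is why Theorem \ref{th:ETalagrand2} leaves it only informally characterized and why the clean specialization in Remark \ref{cor:2} is available only in the standard-Gaussian setting.
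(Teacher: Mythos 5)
Your proposal does not prove the statement it is attached to. The statement is Brenier's theorem: for the quadratic cost, there exists a unique convex potential $\varphi$ such that the optimal coupling between $P_X$ and $P_Y$ is induced by the map $\nabla\varphi$ pushing $P_X$ forward to $P_Y$. What you have written is instead a proof sketch of Theorem \ref{th:ETalagrand2}, the second Talagrand-type inequality; indeed your argument explicitly \emph{invokes} ``the Brenier potential pushing $P_X$ forward to $P_Y$ provided by Theorem \ref{Brenier},'' so as a proof of Theorem \ref{Brenier} it would be circular. Nothing in the proposal addresses what Brenier's theorem actually requires: existence of an optimal plan (compactness of $\Pi(P_X,P_Y)$ plus lower semi-continuity of the cost), the fact that optimal plans are supported on cyclically monotone sets, Rockafellar's theorem identifying such sets with subdifferentials of convex functions, a.e.\ differentiability of convex functions (so the subdifferential is single-valued $P_X$-a.e.\ when $P_X$ is absolutely continuous), and uniqueness of $\nabla\varphi$ up to a $P_X$-null set. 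None of these ingredients appear.

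For what it is worth, the paper itself offers no proof of Theorem \ref{Brenier} either --- it is stated as a known result with a citation to \cite[Theorem 2.12]{villani2003topics} --- so there is no internal argument to compare against. Your sketch does track the paper's Appendix \ref{pf:3} for Theorem \ref{th:ETalagrand2} quite faithfully (the decomposition $X = tX' + X_2$, the splitting of the cross term, the definition of $\epsilon$, and the use of Lemmas \ref{lemma:1} and \ref{lemma:2}), and your remarks about the deconvolution step and the integration by parts are sensible; but it answers the wrong question. A correct response here would either reproduce the standard duality/cyclical-monotonicity proof of Brenier's theorem or simply note that the result is quoted from the literature.
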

\begin{lemma}\cite[Theorem 9.17]{villani2003topics}
\label{lemma:1}
Let $d\mu = e^{-V}$. Let $f =  \frac{dP_{X}}{d\mu}$ being a Radon-Nikodym derivative between two measures $P_X$ and $\mu$. Let $\nabla \varphi$ be a Brenier map as in Theorem \ref{Brenier}. We have
\begin{align}
&\int \nabla V(x)\cdot (\nabla \varphi(x) -x)f(x)d\mu(x)\nonumber\\
\geq& \int [(\Delta \varphi - n)f+(\nabla \varphi- x)\cdot\nabla f]\,d\mu\nonumber\\
=&\int \Delta\varphi f \,d\mu - n + \int (\nabla\varphi-x)\cdot \nabla f\,d\mu,\label{eq:lemma1}
\end{align}
where $\nabla \varphi(x) -x$ is called displacement. For the first term of \eqref{eq:lemma1}, because $\varphi$ is convex, from \cite[Lemma 2.6]{bolley2018dimensional}, we have
 \begin{equation}
 \label{eq:1:1}
\int \Delta\varphi \,dP_X  \geq ne^{\frac{1}{n}(h(\nabla\varphi(X))-h(X))}.
 \end{equation}
Moreover, the last term of \eqref{eq:lemma1} can be bounded using Cauchy–Schwarz inequality as follows,
 \begin{align*}
&\int (\nabla\varphi-x)\cdot \nabla f\,d\mu\\ \geq& -\bigg[\int \|\nabla\varphi-x\|^2f\,d\mu\bigg]^{1/2}\bigg[\int\frac{\|\nabla f\|^2}{f}d\mu\bigg]^{1/2}\\
= &-\mathcal{W}_2(P_{X}, \nabla\varphi_{\#}P_{X})\cdot\sqrt{I(P_{X}|\mu)}.
\end{align*}
 \end{lemma}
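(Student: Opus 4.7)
The plan is to mirror the strategy used for Theorem \ref{th:HWI}, but perform the EPI-style decomposition on the source $X$ rather than on the target $Y$, thereby moving the entropy-power saturation factor to the $P_X$ side at the price of a residual term $\epsilon$.

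First, I would invoke the Taylor bound \eqref{eq:Taylor} implied by $D^2 V \geq \lambda I_n$, which reduces the claim to controlling the first-order correction term: for any coupling $P$,
\[
\frac{\lambda}{2}\int \|y-x\|^2\, dP \leq \int\bigl[V(y) - V(x) - \nabla V(x)\cdot(y-x)\bigr]\, dP.
\]
Since $\mathcal{W}_2^2(P_X,P_Y;R)$ is an infimum, it suffices to exhibit one admissible coupling in $\Pi(P_X,P_Y;R)$ and then lower-bound the quantity $\int \nabla V(x)\cdot(y-x)\, dP$ on that coupling.

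Next, I would construct the coupling by writing $X = tX' + X_2$ with $X'$ an independent copy of $X$ and $X_2$ zero-mean and independent of $X'$, where $t\in[0,1]$ is chosen so that $h(X) - h(X_2)\leq R$. By Brenier's theorem (Theorem \ref{Brenier}) applied to $(P_{X'},P_Y)$, there exists a convex $\varphi$ with $Y = \nabla\varphi(X')$ realising the $L^2$-optimal transport. Taking $P$ to be the joint law of $(X,Y) = (tX' + X_2,\, \nabla\varphi(X'))$, independence of $X_2$ from $X'$ (and hence from $Y$) gives $h(X\mid Y) = h(X_2)$, whence $I(X;Y) = h(X) - h(X_2) \leq R$, so $P \in \Pi(P_X,P_Y;R)$. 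Then I would dissect $\int \nabla V(x)\cdot(y-x)\, dP$: since $dP_X = e^{-V}\, dx$, integration by parts gives $-\int \nabla V(x)\cdot x\, dP_X = -n$, and the remaining cross term splits as
\[
t\!\int \nabla V(x')\cdot\nabla\varphi(x')\, dP_{X'} + \!\iint \bigl[\nabla V(tx'+x_2) - t\nabla V(x')\bigr]\cdot\nabla\varphi(x')\, dP_{X'}\, dP_{X_2}.
\]
The first summand equals $t\int \Delta\varphi\, dP_{X'}$ by another integration by parts, and is bounded below by $t\, n\, e^{\frac{1}{n}(h(Y)-h(X))}$ via \eqref{eq:1:1}. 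Setting $C_x(P_X,R):=t$ and collecting the last summand as $\epsilon$ yields \eqref{eq:ETalagrand2}.

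The chief obstacle is the residual $\epsilon = -\iint[\nabla V(tx'+x_2) - t\nabla V(x')]\cdot\nabla\varphi(x')\, dP_{X'}\, dP_{X_2}$. Its meaning is transparent only when $\nabla V$ is affine: then the zero mean of $X_2$ makes the inner integral vanish, $\epsilon = 0$, and one can take the EPI-saturating choice $t=\sqrt{1-e^{-2R/n}}$, thereby recovering \eqref{eq:Tal2} (cf.\ Remark \ref{cor:2}). For nonlinear $\nabla V$ the smoothed quantity $\int \nabla V(tx'+x_2)\, dP_{X_2}$ may drift from $t\nabla V(x')$ in a way I cannot control without additional structural assumptions, and the statement simply absorbs this drift into $\epsilon$. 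Moreover, existence of a decomposition $X = tX' + X_2$ with the prescribed entropy budget is itself a deconvolution question; I would not attempt to solve it in closed form beyond the affine-$\nabla V$ case, which is precisely the regime where the bound becomes clean.
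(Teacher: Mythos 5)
There is a fundamental mismatch here: what you have written is a proof of Theorem~\ref{th:ETalagrand2} (the second Talagrand-type inequality with the $C_x(P_X,R)$ factor and the residual $\epsilon$), not a proof of Lemma~\ref{lemma:1}. The statement you were asked to prove is the auxiliary lemma itself --- the displacement-convexity inequality $\int \nabla V\cdot(\nabla\varphi - x)f\,d\mu \geq \int[(\Delta\varphi - n)f + (\nabla\varphi - x)\cdot\nabla f]\,d\mu$, the entropy-power lower bound \eqref{eq:1:1} on $\int\Delta\varphi\,dP_X$, and the Cauchy--Schwarz bound on $\int(\nabla\varphi - x)\cdot\nabla f\,d\mu$. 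Your argument never touches the first of these, and it explicitly \emph{invokes} \eqref{eq:1:1} as a known ingredient (``bounded below by $t\,n\,e^{\frac{1}{n}(h(Y)-h(X))}$ via \eqref{eq:1:1}''), so as a proof of the lemma it is circular: you are assuming the key estimate you were supposed to establish.

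To actually prove Lemma~\ref{lemma:1} you would need three separate arguments. For the first inequality, the route (following Villani, Theorem 9.17) is an integration by parts of $\int \nabla V\cdot(\nabla\varphi - x)f\,d\mu$ against the measure $d\mu = e^{-V}$, using $\nabla V e^{-V} = -\nabla(e^{-V})$ and the fact that the distributional Laplacian of the convex function $\varphi$ dominates its absolutely continuous part, which produces the terms $(\Delta\varphi - n)f$ and $(\nabla\varphi - x)\cdot\nabla f$ with the stated inequality direction. For \eqref{eq:1:1}, the argument (Bolley et al., Lemma 2.6) combines the arithmetic--geometric mean inequality $\frac{1}{n}\Delta\varphi \geq (\det D^2\varphi)^{1/n}$, Jensen's inequality applied to $\int (\det D^2\varphi)^{1/n}\,dP_X$, and the Monge--Amp\`ere equation relating the densities of $X$ and $\nabla\varphi(X)$, which converts $\mathbb{E}[\log\det D^2\varphi(X)]$ into $h(\nabla\varphi(X)) - h(X)$. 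For the last bound, Cauchy--Schwarz is the right tool, but one must also identify $\int\|\nabla\varphi - x\|^2 f\,d\mu$ with $\mathcal{W}_2^2(P_X,\nabla\varphi_\#P_X)$, which uses that $\nabla\varphi$ is the Brenier (hence optimal) map, and recognize $\int \|\nabla f\|^2/f\,d\mu$ as the relative Fisher information $I(P_X|\mu)$. None of these steps appear in your proposal.
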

\begin{lemma} \cite[Fact 7]{cordero2017transport} \label{lemma:2}
For any $\nabla \varphi \in L^1(\mathcal{X})\cap L^2(\mathcal{X})$ on a Polish space $(\mathcal{X}, \mu)$ and $d\mu = e^{-V}$, we have
\[\int\Delta \varphi \,d\mu = \int \nabla \varphi \cdot \nabla V \,d\mu.\]
\end{lemma}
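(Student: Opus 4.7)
The plan is to prove Lemma~\ref{lemma:2} by straightforward integration by parts, with $e^{-V}$ absorbed as a weight. The key algebraic identity I would exploit is the product rule for the divergence operator applied to the weighted vector field $e^{-V}\nabla\varphi$, namely
\[
\nabla\cdot\bigl(e^{-V}\nabla\varphi\bigr)\;=\;e^{-V}\,\Delta\varphi\;+\;\nabla(e^{-V})\cdot\nabla\varphi\;=\;e^{-V}\bigl(\Delta\varphi-\nabla V\cdot\nabla\varphi\bigr).
\]
The entire claim reduces to showing that the integral of the left-hand side over $\mathcal{X}$ vanishes, after which rearrangement and $d\mu=e^{-V}\,dx$ yield the stated equality.

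Concretely, I would first write $\int \Delta\varphi\,d\mu = \int (\nabla\cdot\nabla\varphi)\,e^{-V}\,dx$ and then substitute the product-rule identity above to obtain
\[
\int \Delta\varphi\,d\mu \;=\; \int \nabla\cdot\bigl(e^{-V}\nabla\varphi\bigr)\,dx \;+\; \int \nabla\varphi\cdot\nabla V\,d\mu.
\]
Next I would apply the divergence theorem on a large ball $B_R\subset\mathcal{X}=\mathbb{R}^n$ to rewrite the first term on the right as a boundary flux $\int_{\partial B_R} e^{-V}\,\nabla\varphi\cdot\nu\,dS$, and argue that this flux tends to $0$ as $R\to\infty$. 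This is where the integrability hypothesis $\nabla\varphi\in L^1(\mathcal{X})\cap L^2(\mathcal{X})$ (with respect to $\mu$) enters: combined with the decay of $e^{-V}$ inherited from $D^2V\geq\lambda I_n$ used elsewhere in the paper (which makes $\mu$ a finite measure with Gaussian-type tails), one can extract a subsequence of radii along which the surface integrals vanish in the limit, by a standard Fubini/co-area argument showing that $R\mapsto\int_{\partial B_R} e^{-V}|\nabla\varphi|\,dS$ cannot stay bounded away from $0$ if its radial integral is finite.

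The main obstacle, as usual in these weighted integration-by-parts statements, is the rigorous justification of the vanishing boundary term under the stated function-space hypotheses, rather than the algebraic manipulation itself. A clean alternative, if one prefers to avoid boundary-flux estimates, is to approximate $\nabla\varphi$ by compactly supported smooth vector fields $\Phi_k$ for which the identity $\int\nabla\cdot\Phi_k\,d\mu=\int\Phi_k\cdot\nabla V\,d\mu$ holds by classical integration by parts with no boundary contribution, and then pass to the limit using the $L^1(\mu)$ and $L^2(\mu)$ bounds on $\nabla\varphi$ to control both sides; this is essentially the route taken in \cite{cordero2017transport}, so I would cite that reference for the approximation argument and present the product-rule computation above as the core content of the proof.
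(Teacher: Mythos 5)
The paper does not actually prove this lemma --- it is imported verbatim as a citation to \cite[Fact 7]{cordero2017transport} and used as a black box in the proof of Theorem \ref{th:ETalagrand2} --- so there is no in-paper argument to compare against; your proposal supplies the standard proof, and it is the right one. The product-rule identity $\nabla\cdot(e^{-V}\nabla\varphi)=e^{-V}(\Delta\varphi-\nabla V\cdot\nabla\varphi)$, the reduction to a vanishing flux, and the extraction of a subsequence of radii $R_k\to\infty$ along which $\int_{\partial B_{R_k}}e^{-V}|\nabla\varphi|\,dS\to 0$ (which follows from $\nabla\varphi\in L^1(\mu)$ via the co-area formula, since a nonnegative function with finite integral over $(0,\infty)$ has liminf zero) are all correct, and the truncation/approximation alternative you mention is indeed how such statements are usually made rigorous. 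The one point you should flag more carefully is the regularity of $\varphi$: in the paper's application $\varphi$ is a Brenier potential, hence merely convex, and for a convex function the distributional Laplacian is a nonnegative measure whose singular part is invisible to the pointwise (Alexandrov) Laplacian; the divergence theorem then yields only an inequality unless $\varphi\in W^{2,1}_{\mathrm{loc}}$ or $\Delta\varphi$ is interpreted distributionally. This is precisely why the companion bound \eqref{eq:1:1} in Lemma \ref{lemma:1} is stated as an inequality. So your proof is valid under the implicit hypothesis that $\Delta\varphi$ exists as an $L^1(\mu)$ function (which is how the lemma is stated), but a fully rigorous version for Brenier maps would either assume that regularity explicitly or settle for the inequality $\int\Delta\varphi\,d\mu\le\int\nabla\varphi\cdot\nabla V\,d\mu$ with $\Delta\varphi$ the absolutely continuous part, which is all the paper's downstream argument actually needs.
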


\section*{Acknowledgement}
This work is funded in part by the Swedish Foundation for Strategic Research.

\IEEEtriggeratref{14}
\bibliographystyle{IEEEtran}
\bibliography{cite}

\begin{thebibliography}{10}
\providecommand{\url}[1]{#1}
\csname url@samestyle\endcsname
\providecommand{\newblock}{\relax}
\providecommand{\bibinfo}[2]{#2}
\providecommand{\BIBentrySTDinterwordspacing}{\spaceskip=0pt\relax}
\providecommand{\BIBentryALTinterwordstretchfactor}{4}
\providecommand{\BIBentryALTinterwordspacing}{\spaceskip=\fontdimen2\font plus
\BIBentryALTinterwordstretchfactor\fontdimen3\font minus
  \fontdimen4\font\relax}
\providecommand{\BIBforeignlanguage}[2]{{%
\expandafter\ifx\csname l@#1\endcsname\relax
\typeout{** WARNING: IEEEtran.bst: No hyphenation pattern has been}%
\typeout{** loaded for the language `#1'. Using the pattern for}%
\typeout{** the default language instead.}%
\else
\language=\csname l@#1\endcsname
\fi
#2}}
\providecommand{\BIBdecl}{\relax}
\BIBdecl

\bibitem{talagrand1996transportation}
M.~Talagrand, ``Transportation cost for gaussian and other product measures,''
  \emph{Geometric \& Functional Analysis GAFA}, vol.~6, no.~3, pp. 587--600,
  1996.

\bibitem{bakry2012dimension}
D.~Bakry, F.~Bolley, and I.~Gentil, ``Dimension dependent hypercontractivity
  for gaussian kernels,'' \emph{Probability Theory and Related Fields}, vol.
  154, no.~3, pp. 845--874, 2012.

\bibitem{cordero2017transport}
D.~Cordero-Erausquin, ``Transport inequalities for log-concave measures,
  quantitative forms, and applications,'' \emph{Canadian Journal of
  Mathematics}, vol.~69, no.~3, pp. 481--501, 2017.

\bibitem{bolley2018dimensional}
F.~Bolley, I.~Gentil, A.~Guillin \emph{et~al.}, ``Dimensional improvements of
  the logarithmic sobolev, talagrand and brascamp--lieb inequalities,''
  \emph{The Annals of Probability}, vol.~46, no.~1, pp. 261--301, 2018.

\bibitem{raginsky2018concentration}
M.~Raginsky and I.~Sason, ``Concentration of measure inequalities in
  information theory, communications and coding,'' \emph{Foundations and Trends
  in Communications and Information Theory; NOW Publishers: Boston, MA, USA},
  2018.

\bibitem{zhang2018policy}
R.~Zhang, C.~Chen, C.~Li, and L.~Carin, ``Policy optimization as wasserstein
  gradient flows,'' in \emph{International Conference on Machine
  Learning}.\hskip 1em plus 0.5em minus 0.4em\relax PMLR, 2018, pp. 5737--5746.

\bibitem{montavon2016wasserstein}
G.~Montavon, K.-R. M{\"u}ller, and M.~Cuturi, ``Wasserstein training of
  restricted boltzmann machines,'' in \emph{Proceedings of the 30th
  International Conference on Neural Information Processing Systems}, 2016, pp.
  3718--3726.

\bibitem{arjovsky2017wasserstein}
M.~Arjovsky, S.~Chintala, and L.~Bottou, ``Wasserstein generative adversarial
  networks,'' in \emph{International conference on machine learning}.\hskip 1em
  plus 0.5em minus 0.4em\relax PMLR, 2017, pp. 214--223.

\bibitem{rigollet2019uncoupled}
P.~Rigollet and J.~Weed, ``Uncoupled isotonic regression via minimum
  wasserstein deconvolution,'' \emph{Information and Inference: A Journal of
  the IMA}, vol.~8, no.~4, pp. 691--717, 2019.

\bibitem{cuturi2013sinkhorn}
M.~Cuturi, ``Sinkhorn distances: lightspeed computation of optimal transport.''
  in \emph{NIPS}, vol.~2, no.~3, 2013, p.~4.

\bibitem{9174478}
Y.~{Bai}, X.~{Wu}, and A.~{{\"O}zg{\"u}r}, ``Information constrained optimal
  transport: From talagrand, to marton, to cover,'' in \emph{2020 IEEE
  International Symposium on Information Theory (ISIT)}, 2020, pp. 2210--2215.

\bibitem{shannon1948mathematical}
C.~E. Shannon, ``A mathematical theory of communication,'' \emph{The Bell
  system technical journal}, vol.~27, no.~3, pp. 379--423, 1948.

\bibitem{stam1959some}
A.~J. Stam, ``Some inequalities satisfied by the quantities of information of
  fisher and shannon,'' \emph{Information and Control}, vol.~2, no.~2, pp.
  101--112, 1959.

\bibitem{rioul2010information}
O.~Rioul, ``Information theoretic proofs of entropy power inequalities,''
  \emph{IEEE Transactions on Information Theory}, vol.~57, no.~1, pp. 33--55,
  2010.

\bibitem{kantorovich1942translocation}
L.~V. Kantorovich, ``On the translocation of masses,'' in \emph{Dokl. Akad.
  Nauk. USSR (NS)}, vol.~37, 1942, pp. 199--201.

\bibitem{cover1999elements}
T.~M. Cover, \emph{Elements of information theory}.\hskip 1em plus 0.5em minus
  0.4em\relax John Wiley \& Sons, 1999.

\bibitem{villani2008optimal}
C.~Villani, \emph{Optimal transport: old and new}.\hskip 1em plus 0.5em minus
  0.4em\relax Springer Science \& Business Media, 2008, vol. 338.

\bibitem{DupuisPaul2011AWCA}
P.~Dupuis and R.~S. Ellis, \emph{\BIBforeignlanguage{eng}{A Weak Convergence
  Approach to the Theory of Large Deviations}}, 1st~ed., ser. Wiley series in
  probability and statistics.\hskip 1em plus 0.5em minus 0.4em\relax Hoboken:
  Wiley-Interscience, 2011.

\bibitem{villani2003topics}
C.~Villani, \emph{Topics in optimal transportation}.\hskip 1em plus 0.5em minus
  0.4em\relax American Mathematical Soc., 2003, no.~58.

\bibitem{blower2003gaussian}
G.~Blower, ``The gaussian isoperimetric inequality and transportation,''
  \emph{Positivity}, vol.~7, no.~3, pp. 203--224, 2003.

\bibitem{bakry2006logarithmic}
D.~Bakry, M.~Ledoux \emph{et~al.}, ``A logarithmic sobolev form of the li-yau
  parabolic inequality,'' \emph{Revista Matem{\'a}tica Iberoamericana},
  vol.~22, no.~2, pp. 683--702, 2006.

\bibitem{stefanski1990deconvolving}
L.~A. Stefanski and R.~J. Carroll, ``Deconvolving kernel density estimators,''
  \emph{Statistics}, vol.~21, no.~2, pp. 169--184, 1990.

\bibitem{fan1991optimal}
J.~Fan, ``On the optimal rates of convergence for nonparametric deconvolution
  problems,'' \emph{The Annals of Statistics}, pp. 1257--1272, 1991.

\bibitem{masry1991multivariate}
E.~Masry, ``Multivariate probability density deconvolution for stationary
  random processes,'' \emph{IEEE Transactions on Information Theory}, vol.~37,
  no.~4, pp. 1105--1115, 1991.

\bibitem{janati2020entropic}
H.~Janati, B.~Muzellec, G.~Peyr{\'e}, and M.~Cuturi, ``Entropic optimal
  transport between unbalanced gaussian measures has a closed form,''
  \emph{Advances in Neural Information Processing Systems}, vol.~33, 2020.

\bibitem{courtade2017strong}
T.~A. Courtade, ``A strong entropy power inequality,'' \emph{IEEE Transactions
  on Information Theory}, vol.~64, no.~4, pp. 2173--2192, 2017.

\bibitem{flamary2021pot}
\BIBentryALTinterwordspacing
R.~Flamary, N.~Courty, A.~Gramfort, M.~Z. Alaya, A.~Boisbunon, S.~Chambon,
  L.~Chapel, A.~Corenflos, K.~Fatras, N.~Fournier, L.~Gautheron, N.~T. Gayraud,
  H.~Janati, A.~Rakotomamonjy, I.~Redko, A.~Rolet, A.~Schutz, V.~Seguy, D.~J.
  Sutherland, R.~Tavenard, A.~Tong, and T.~Vayer, ``Pot: Python optimal
  transport,'' \emph{Journal of Machine Learning Research}, vol.~22, no.~78,
  pp. 1--8, 2021. [Online]. Available:
  \url{http://jmlr.org/papers/v22/20-451.html}
\BIBentrySTDinterwordspacing

\end{thebibliography}
\end{document}